\theoremstyle{plain}
\newtheorem{theorem}{Theorem}[section]
\newtheorem{proposition}[theorem]{Proposition}
\newtheorem{lemma}[theorem]{Lemma}
\theoremstyle{definition}
\newtheorem{definition}[theorem]{Definition}
\theoremstyle{remark}
\title{A Global Characterization of $f$-Divergences Yielding PSD Mutual-Information Matrices}
\author{\IEEEauthorblockN{Zachary Robertson}
\IEEEauthorblockA{\textit{Computer Science} \\
\textit{Stanford University}\\
Stanford, CA \\
zroberts@stanford.edu}
}
\begin{document}

\maketitle

\begin{abstract}
Given $n$ random variables, when does the matrix of pairwise $f$-mutual informations define a PSD kernel over variables? For convex finite generators $f:(0,\infty)\to\mathbb{R}$ with $f(1)=0$ and finite boundary value $f(0)$, we give a closed characterization up to linear transformation $f\sim f+c(t-1)$, which leaves every $f$-divergence and every $f$-mutual-information matrix unchanged. The matrix $M^{(f)}_{ij}:=I_f(X_i;X_j)$ is PSD for every finite-alphabet family if and only if the normalized representative has a globally convergent expansion $\bar f(t)=\sum_{m\ge2}a_m(t-1)^m$, with $a_m\ge0$, on all of $(0,\infty)$. Sufficiency follows from a replica embedding for monomial generators plus closure under nonnegative mixtures. Necessity first extracts the local Taylor cone at $1$ using biased three-point kernels $H_a$, the Belton--Guillot--Khare--Putinar (BGKP) low-rank Hankel positivity-preserver theorem, and then bootstraps analyticity to the divergence. This is a kernel characterization problem, not a metric one: PSD of the variable-indexed matrix is distinct from Hilbertian properties of divergences between distributions. The result explains why Shannon MI and Jensen--Shannon fail, why $\chi^2$ succeeds, and why non-analytic divergences such as total variation and ReLU are excluded.
\end{abstract}

\section{Introduction}

Given $n$ random variables $X_1,\ldots,X_n$, when does the matrix of pairwise mutual informations
$M_{ij}=I(X_i;X_j)$ define a positive semidefinite (PSD) kernel over variables? This question is fundamental for kernel methods built on dependence measures: factor analysis, independence testing, and feature extraction all benefit when $M\succeq 0$ \cite{steuer2002mutual,reshef2011detecting,ver2015information}. Mutual information also appears in transformer training dynamics: Nichani et al.\ \cite{Nichani2024HowTL} show that attention gradients encode pairwise $\chi^2$-mutual information between tokens. Notably, $\chi^2$-divergence ($f(t)=(t-1)^2$) lies in our PSD-generating cone, so the corresponding finite-alphabet $f$-MI matrices are PSD without a near-independence restriction. Yet Shannon mutual information fails this requirement for $n\ge 4$ \cite{jakobsen2014mi},
and (as we show) this indefiniteness can occur under arbitrarily weak pairwise dependence.

This paper characterizes which $f$-divergences yield PSD mutual-information matrices.
The question is distinct from metric properties of divergences between \emph{distributions}
(e.g., $\sqrt{\text{JS}}$ being a metric \cite{endres2003new,briet2009properties}):
we study the \emph{variable-indexed} matrix $M^{(f)}_{ij}:=I_f(X_i;X_j)$,
not distances between probability measures. An $f$-divergence between distributions $P$ and $Q$ is
$$
D_f(P\|Q)=\sum_x q(x)\,f\!\left(\frac{p(x)}{q(x)}\right),
$$
with \(f\) convex and \(f(1)=0\) \cite{csiszar1967information,csiszar2004information}. We define
$$
I_f(X;Y)=D_f\big(P_{XY}\,\|\,P_X\otimes P_Y\big),
$$
and set \(M^{(f)}_{ij}:=I_f(X_i;X_j)\) whenever these values are finite (in particular, the diagonal is finite under mild conditions such as \(f(0)<\infty\)). We ask for PSD \emph{globally}: for every number of variables and every finite-alphabet family. This is the natural kernel requirement for variable-indexed matrices, because the number of measured variables is determined by the dataset rather than by the divergence.

There is one unavoidable non-identifiability. Composition with a linear transformation does not change any divergence:
\[
D_{f+c(t-1)}(P\|Q)=D_f(P\|Q),
\]
since \(\sum_x q(x)(p(x)/q(x)-1)=0\). Hence the right object is the linear transformation class \(f\sim f+c(t-1)\). Our main result says that, after choosing the normalized representative with no linear term at \(t=1\), the PSD-generating cone is exactly the nonnegative cone generated by \((t-1)^m\), \(m\ge2\), and this representation must hold globally on \((0,\infty)\).

This is a kernel characterization problem, not a metric one: PSD of the \emph{variable-indexed} matrix is distinct from Hilbertian properties of divergences between distributions. A tempting local-only statement would miss the diagonal obstruction, since self-information terms evaluate \(f\) at values of order \(1/p_x\), far from \(1\). The theorem below closes this gap by proving that global PSD forces the local nonnegative Taylor expansion at \(1\) to analytically propagate across all of \((0,\infty)\).

\section{Main Results}

\begin{theorem}[PSD-generating $f$: global iff modulo linear transformation]
\label{thm:main}
Let $f:(0,\infty)\to\mathbb{R}$ be convex and finite, with $f(1)=0$, and assume the finite boundary value $f(0):=\lim_{t\downarrow0}f(t)<\infty$. The following are equivalent.
\begin{enumerate}
\item For every $n\in\mathbb{N}$ and every finite-alphabet family of random variables $(X_1,\dots,X_n)$ for which the entries are finite, the matrix
\[
M^{(f)}_{ij}:=I_f(X_i;X_j)
\]
is positive semidefinite.
\item There exist a scalar $c\in\mathbb{R}$ and coefficients $a_m\ge0$ such that the normalized representative
\[
\bar f(t):=f(t)+c(t-1)
\]
satisfies
\[
\bar f(t)=\sum_{m\ge2}a_m(t-1)^m
\]
for every $t\in(0,\infty)$.
\end{enumerate}
Equivalently, modulo the linear transformation $f\sim f+c(t-1)$, the PSD-generating cone is precisely the cone of global nonnegative mixtures of the monomials $(t-1)^m$, $m\ge2$.
\end{theorem}

The linear transformation is not cosmetic: it is exactly the null direction of $f$-divergences. Once a representative is normalized by \(\bar f'(1)=0\), the coefficients \(a_m\) are uniquely determined. Theorem~\ref{thm:main} has three immediate consequences. First, a single negative Taylor coefficient at $1$ rules out global PSD. Second, convex but non-analytic generators at $1$ (e.g. total variation and ReLU) cannot be PSD-generating. Third, generators such as $\chi^2$ and nonnegative entire mixtures of the monomials succeed for every finite alphabet, without a small-dependence restriction.

\subsection{Proof overview}

Sufficiency is direct: each monomial $f_m(t)=(t-1)^m$ is realized as a Gram matrix by the replica embedding, and nonnegative mixtures preserve PSD. Necessity has two layers. The local layer uses the biased latent family below to form three-point kernels $H_a$; replica forcing and the Belton--Guillot--Khare--Putinar (BGKP) low-rank Hankel theorem imply that several slices $H_a$ have nonnegative power-series expansions. We bootstrap analyticity of $f$ at $1$ from analyticity at three of these slices, and coefficient identification yields $f^{(m)}(1)\ge0$ for all $m\ge2$.

The global layer uses Bernoulli slices. For Bernoulli-$p$ variables with covariance $\rho$, the mutual information is an explicit three-term expression $H_p(\rho)$. Global PSD realizes shifted low-rank Hankel tests around every $\rho_0\in(0,p(1-p))$, so shifted BGKP makes $H_p$ analytic throughout this interval. Algebraically solving the formula for $f$ first propagates the local analytic Taylor coefficients to all $t>1$ and then, using the symmetric $p=1/2$ slice, to all $t\in(0,1)$. Finally, Vivanti--Pringsheim rules out a finite positive radius of convergence for the nonnegative Taylor series, so the local series is the global expansion on all of $(0,\infty)$.

\section{Preliminaries}

Given $n$ random variables $X_1,\ldots,X_n$, define the $f$-mutual-information matrix by $M^{(f)}_{ij}:=I_f(X_i;X_j)$ whenever these quantities are finite.

\begin{definition}[PSD-generating]
We say that $f$ is \emph{PSD-generating} if $M^{(f)}(X_1,\dots,X_n)\succeq0$ for every finite-alphabet family $(X_1,\dots,X_n)$ for which all entries are finite.
\end{definition}

\begin{definition}[linear transformation and normalization]
Write $f\sim g$ if $g(t)=f(t)+c(t-1)$ for some $c\in\mathbb R$. Then $I_g(X;Y)=I_f(X;Y)$ for all finite-alphabet $X,Y$. When $f$ is differentiable at $1$, the normalized representative is $\bar f(t)=f(t)-f'(1)(t-1)$, so that $\bar f(1)=\bar f'(1)=0$.
\end{definition}

\begin{definition}[Pairwise weak dependence]\label{def:pairwise-weak}
For a finite collection of discrete random variables $X_1,\dots,X_n$ and $\delta>0$,
define for each pair $(i,j)$ the pairwise joint-to-product ratio
\[
r_{ij}(x_i,x_j)\;:=\;\frac{p_{X_iX_j}(x_i,x_j)}{p_{X_i}(x_i)\,p_{X_j}(x_j)}
\]
for all $(x_i,x_j)$ with $p_{X_i}(x_i)p_{X_j}(x_j)>0$.
We say $(X_1,\dots,X_n)$ is \emph{$\delta$-pairwise-weakly-dependent} if
$r_{ij}(x_i,x_j)\in(1-\delta,1+\delta)$ for all $i\neq j$ and all $(x_i,x_j)$.
\end{definition}

The pairwise weak-dependence notion is used only in the local extraction step of the proof: global PSD implies the same PSD conclusion for every sufficiently weak pairwise family. For Shannon mutual information ($f(t)=t\log t$), it is known that the MI matrix is PSD for $n\le3$ random variables but counterexamples exist for $n=4$ \cite{jakobsen2014mi}. Theorem~\ref{thm:main} identifies the exact obstruction for all convex finite $f$-generators, modulo linear transformation.

\begin{definition}[Local absolute monotonicity at $1$]\label{def:abs-mono}
We say that $f$ is \emph{locally absolutely monotone at $t=1$ modulo linear transformation} if there exists $\varepsilon>0$ and coefficients $a_m\ge0$ such that, after adding a linear transformation term if necessary,
\[
\bar f(t):=f(t)+c(t-1)=\sum_{m=2}^\infty a_m (t-1)^m,\qquad a_m\ge 0,\ \ |t-1|<\varepsilon.
\]
\end{definition}
\noindent

Equivalently, after normalization, $\bar f^{(m)}(1)\ge0$ for all $m\ge2$.

\section{Proof of Theorem~\ref{thm:main}}

We prove the two directions of Theorem~\ref{thm:main}. Throughout, the linear transformation is harmless because the linear term contributes zero to every $f$-divergence.

\subsection{Sufficiency}

Assume that, after a linear transformation normalization,
$$
f(t)=\sum_{m=2}^{\infty}a_m(t-1)^m,\qquad a_m\ge0,
$$
for every $t\in(0,\infty)$. Since $D_f$ is linear in $f$, it suffices to realize each monomial term as a Gram inner product and then take a nonnegative combination. The monomials $(t-1)^m$ are used as termwise generators; convexity is imposed on the resulting $f$, not on each odd monomial separately.

\begin{proposition}[Replica embedding for monomial generators]
\label{prop:replica}
Fix \(m\ge 2\). For any collection of discrete random variables $X_1, \ldots, X_n$, there exist functions $g_i^{(m)}$ such that
$$
I_{f_m}(X_i; X_j) = \langle g_i^{(m)}, g_j^{(m)} \rangle,
\qquad f_m(t)=(t-1)^m.
$$
Hence the matrix $M^{(m)}_{ij} := I_{f_m}(X_i; X_j)$ is a Gram matrix and therefore positive semidefinite.
\end{proposition}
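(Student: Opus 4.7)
The plan is to produce, for each $i$, an element $g_i^{(m)}$ of a common $L^2$ space such that $I_{f_m}(X_i;X_j)=\langle g_i^{(m)},g_j^{(m)}\rangle$; once this is in place, $M^{(m)}$ is a Gram matrix and hence PSD.

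First I would unfold $I_{f_m}$ explicitly. Writing $p_{ij}, p_i, p_j$ for the joint and marginal pmfs, a direct substitution of $f_m(t)=(t-1)^m$ gives
\[
I_{f_m}(X_i;X_j) \;=\; \sum_{x,y:\,p_i(x)p_j(y)>0} \frac{(p_{ij}(x,y)-p_i(x)p_j(y))^m}{(p_i(x)p_j(y))^{m-1}}.
\]
The decisive step is to read the numerator as a covariance: with the centered indicator $\phi_{i,x}:=\mathbf{1}\{X_i=x\}-p_i(x)$, viewed as a random variable on the underlying joint probability space of $(X_1,\dots,X_n)$, we have $p_{ij}(x,y)-p_i(x)p_j(y)=\mathbb{E}[\phi_{i,x}\phi_{j,y}]$.

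Next I would introduce $m$ i.i.d.\ copies $(X_1^{(k)},\dots,X_n^{(k)})_{k=1}^{m}$ of the entire joint vector. Raising the scalar covariance to the $m$-th power becomes the expectation of an independent product,
\[
(\mathbb{E}[\phi_{i,x}\phi_{j,y}])^m \;=\; \mathbb{E}\!\left[\prod_{k=1}^{m} \phi_{i,x}^{(k)}\,\phi_{j,y}^{(k)}\right],
\]
where $\phi_{i,x}^{(k)}:=\mathbf{1}\{X_i^{(k)}=x\}-p_i(x)$. Substituting back, exchanging the finite double sum over $(x,y)$ with the expectation, and grouping the $i$-factors apart from the $j$-factors produces
\[
I_{f_m}(X_i;X_j) \;=\; \mathbb{E}[G_i\,G_j], \qquad G_i \;:=\; \sum_{x:\,p_i(x)>0} \frac{\prod_{k=1}^{m}\phi_{i,x}^{(k)}}{p_i(x)^{m-1}}.
\]
Setting $g_i^{(m)}:=G_i$, which is a bounded (hence $L^2$) function of the replicated copies of $X_i$ alone, completes the inner-product representation, and $M^{(m)}\succeq 0$ follows as a Gram matrix.

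The main hurdle is recognizing the replica factorization: the numerator $(p_{ij}-p_ip_j)^m$ looks inseparable in $(i,j)$ until one passes through the covariance identity and tensorizes across $m$ independent copies of the joint, after which Fubini in the finite double sum cleanly splits the integrand into an $i$-piece times a $j$-piece. The residual bookkeeping---handling $p_i(x)p_j(y)=0$ by the standard $f$-divergence convention and checking that $G_i$ has finite $L^2$ norm in the finite-alphabet regime---is routine.
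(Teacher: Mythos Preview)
Your proof is correct and essentially identical to the paper's: the same centered-indicator/replica construction, yielding the same random variables (the paper writes $\phi_i^a$ with a $\sqrt{p_i(a)}$ pre-normalization and then divides by $p_i(a)^{m/2-1}$, but their $g_i^{(m)}$ equals your $G_i$ after regrouping powers). Your phrasing is in one respect slightly cleaner: you explicitly replicate the \emph{entire joint vector} $(X_1,\dots,X_n)$, which is what is actually needed so that $\mathbb{E}[\phi_{i,x}^{(k)}\phi_{j,y}^{(k)}]=p_{ij}(x,y)-p_i(x)p_j(y)$ and so that all $g_i^{(m)}$ live in a common $L^2$ space.
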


\begin{proof}
Define the centered and scaled indicators
$$
\phi_i^a(x) := \frac{\mathbf{1}\{x=a\} - p_i(a)}{\sqrt{p_i(a)}},
\qquad
g_i^{(m)} := \sum_{a} \frac{\prod_{r=1}^{m} \phi_i^a(X_i^{(r)})}{p_i(a)^{\frac{m}{2}-1}},
$$
where $X_i^{(1)}, \ldots, X_i^{(m)}$ are i.i.d.\ copies of $X_i$. Expanding \(I_{f_m}\) and using independence across replica blocks gives
\(I_{f_m}(X_i;X_j)=\mathbb{E}[g_i^{(m)}g_j^{(m)}]=\langle g_i^{(m)},g_j^{(m)}\rangle\).
\end{proof}

\begin{lemma}[Nonnegative mixtures preserve PSD]
\label{lem:mixture}
If $f_1, f_2$ are PSD-generating and $\alpha_1, \alpha_2 \geq 0$, then $f = \alpha_1 f_1 + \alpha_2 f_2$ is PSD-generating. Moreover, for any locally finite nonnegative mixture over $\{m \geq 2\}$,
$$
I_f(X_i; X_j) = \sum_{m \geq 2} a_m I_{f_m}(X_i; X_j),
$$
and the linear term contributes nothing.
\end{lemma}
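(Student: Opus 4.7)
The plan rests on the linearity of the map $f \mapsto D_f$: since $D_f(P\|Q) = \sum_x q(x)\,f\!\left(p(x)/q(x)\right)$ is additive and positively homogeneous in $f$ for each fixed pair $(P,Q)$, specializing to $P = P_{X_iX_j}$ and $Q = P_{X_i}\otimes P_{X_j}$ yields $I_f = \alpha_1 I_{f_1} + \alpha_2 I_{f_2}$ entrywise. Hence $M^{(\alpha_1 f_1 + \alpha_2 f_2)} = \alpha_1 M^{(f_1)} + \alpha_2 M^{(f_2)}$, and closure of the PSD cone under nonnegative combinations immediately yields the two-term statement; induction extends it to any finite nonnegative combination.

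\textbf{Passing to countable mixtures.} For $f(t)=\sum_{m\ge 2} a_m(t-1)^m$ with Taylor radius $\eta>0$, I would upgrade the finite-sum identity to the countable version by uniform convergence. Fix a dependence radius $\delta<\eta$ (as supplied by Theorem~\ref{thm:main}); then under $\delta$-pairwise-weak-dependence every ratio $r_{ij}(x_i,x_j)$ lies in the compact interval $[1-\delta,\,1+\delta]$, so $\sum_m a_m (r_{ij}-1)^m$ converges absolutely and uniformly in $(x_i,x_j)$ to $f(r_{ij})$. Integrating against the finite-support weights $p_{X_i}(x_i)p_{X_j}(x_j)$ is a finite linear operation that commutes with the uniformly convergent series, delivering $I_f(X_i;X_j)=\sum_{m\ge 2} a_m\,I_{f_m}(X_i;X_j)$. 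Each $M^{(f_m)}$ is PSD by Proposition~\ref{prop:replica}, and the quadratic form then satisfies $v^\top M^{(f)} v = \sum_m a_m v^\top M^{(f_m)} v \ge 0$ for every $v\in\mathbb{R}^n$, giving $M^{(f)}\succeq 0$.

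\textbf{The linear term.} A one-line computation with $f_1(t)=t-1$ shows
$I_{f_1}(X_i;X_j) = \sum_{x_i,x_j}\bigl[p_{X_iX_j}(x_i,x_j)-p_{X_i}(x_i)p_{X_j}(x_j)\bigr] = 1-1 = 0$,
since both the joint and the product of marginals sum to one. Thus any affine shift $f \mapsto f + c(t-1)$ leaves $I_f$ invariant, so an $m=1$ term may be freely inserted or removed without affecting $M^{(f)}$; this justifies starting the expansion at $m=2$ in Definition~\ref{def:abs-mono}.

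\textbf{Main obstacle.} The only nontrivial step is the interchange of the finite alphabet expectation with the infinite Taylor series, and this is controlled entirely by the choice $\delta\le\eta$. Since that matching is already part of the hypotheses of Theorem~\ref{thm:main}, no further technical machinery is anticipated; the argument is essentially bookkeeping atop Proposition~\ref{prop:replica}.
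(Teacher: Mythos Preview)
Your proposal is correct and follows essentially the same route as the paper: linearity of $f\mapsto D_f$ gives the finite-mixture statement, the countable mixture is handled by an interchange argument (you use uniform convergence on $[1-\delta,1+\delta]$, the paper phrases it as dominated convergence via $|a_m(r_{ij}-1)^m|\le a_m\delta^m$), and the linear term vanishes by the identical computation $\sum_{x_i,x_j}\bigl[p_{ij}-p_ip_j\bigr]=0$. No substantive differences.
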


Therefore, fix any finite-alphabet family. The set of joint-to-product ratios contributing to the entries is finite and bounded. Since $\sum_{m\ge2}a_m(t-1)^m$ converges for every $t>0$ with $a_m\ge0$, its radius in $z=t-1$ is infinite; hence it is absolutely convergent at each contributing ratio, including $r=0$ by the finite boundary value. Let $f^{(N)}(t)=\sum_{m=2}^N a_m(t-1)^m$. Proposition~\ref{prop:replica} and Lemma~\ref{lem:mixture} give $M^{(f^{(N)})}\succeq0$, and finite summation over atoms gives entrywise convergence $M^{(f^{(N)})}\to M^{(\bar f)}=M^{(f)}$. Since the PSD cone is closed, $M^{(f)}\succeq0$.

\subsection{Necessity}

Assume $f$ is PSD-generating. We first prove a local statement at $t=1$, then propagate it to the whole positive real line.

\textbf{Step 1: local Taylor extraction at $1$.}
We use the biased latent-variable family from the local argument. Fix $a\in(-1,1)$ and $k\ge1$. Let $J\sim\mathrm{Unif}([k])$ and $S\sim\mathrm{Rademacher}(\pm1)$ be independent, and set $U:=S e_J\in\{\pm e_1,\dots,\pm e_k\}\subset\mathbb R^k$. Given loadings $u_i\in\mathbb R^k$, define $Y_i\in\{\pm1\}$ by
\[
\Pr(Y_i=y\mid U)=\tfrac12\bigl(1+y(a+\langle u_i,U\rangle)\bigr),
\qquad |a|+\|u_i\|_\infty\le1.
\]
Then $\Pr(Y_i=y)=\tfrac12(1+ay)$ and, writing $\eta_i:=\langle u_i,U\rangle$,
\[
\rho_{ij}:=\mathbb E[\eta_i\eta_j]=\frac1k\langle u_i,u_j\rangle.
\]
For $i\ne j$, conditional independence given $U$ gives
\[
\frac{\Pr(Y_i=y_i,Y_j=y_j)}{\Pr(Y_i=y_i)\Pr(Y_j=y_j)}
=1+\frac{\rho_{ij}y_i y_j}{(1+ay_i)(1+ay_j)}.
\]
Thus the off-diagonal entries have the three-point form
\begin{align*}
I_f(Y_i;Y_j)&=:H_a(\rho_{ij})\\
&=\frac{(1+a)^2}{4}f\!\left(1+\frac{\rho_{ij}}{(1+a)^2}\right)
+\frac{(1-a)^2}{4}f\!\left(1+\frac{\rho_{ij}}{(1-a)^2}\right)\\
&\quad +\frac{1-a^2}{2}f\!\left(1-\frac{\rho_{ij}}{1-a^2}\right).
\end{align*}
For the diagonal,
\begin{align*}
I_f(Y_i;Y_i)
&=\frac{(1+a)^2}{4}f\!\left(\frac{2}{1+a}\right)
+\frac{(1-a)^2}{4}f\!\left(\frac{2}{1-a}\right)
+\frac{1-a^2}{2}f(0)=:d_a.
\end{align*}

Let $K_a=[H_a(\rho_{ij})]_{i,j}$ and
\[
\Delta_a=\operatorname{diag}\bigl(d_a-H_a(\rho_{11}),\dots,d_a-H_a(\rho_{nn})\bigr).
\]
Form $R$ conditionally independent replicas of each $Y_i$ given the shared latent $U$. Their $f$-MI matrix has the block form
\[
B_R=J_R\otimes K_a+I_R\otimes\Delta_a.
\]
The following elementary implication is the only property of this block form that we use.

\begin{lemma}[Replica forcing]\label{lem:replica-forcing}
Let $K$ be symmetric and let $\Delta$ be diagonal. If $J_R\otimes K+I_R\otimes\Delta\succeq0$ for every $R\ge1$, then $K\succeq0$.
\end{lemma}

\begin{proof}
Diagonalize $J_R$ as $P^\top J_RP=\operatorname{diag}(R,0,\dots,0)$. Then the block matrix is congruent to $\operatorname{diag}(RK+\Delta,\Delta,\dots,\Delta)$. If $v^\top Kv<0$, then $v^\top(RK+\Delta)v<0$ for all sufficiently large $R$, a contradiction.
\end{proof}

Since $f$ is PSD-generating, Lemma~\ref{lem:replica-forcing} gives $K_a\succeq0$ for every admissible finite loading family. The admissible loadings realize all sufficiently small rank-\(\le3\) Hankel tests.

\begin{lemma}[Low-rank Hankel realization]\label{lem:small-gram}
Fix $a\in[0,1)$. There exists $\rho_a>0$ such that every Hankel matrix $A\in P_n((-\rho_a,\rho_a))$ of rank at most $3$ is realized by the latent model with $k=3$: there are $u_1,\dots,u_n\in\mathbb R^3$ with $|a|+\|u_i\|_\infty\le1$ and $\rho_{ij}=A_{ij}$.
\end{lemma}

\begin{proof}
Factor $A_{ij}=\langle v_i,v_j\rangle$ with $v_i\in\mathbb R^3$ and set $u_i=\sqrt3\,v_i$. Then $\rho_{ij}=\frac13\langle u_i,u_j\rangle=A_{ij}$. Choosing $\rho_a<(1-a)^2/3$ ensures
\[
\|u_i\|_\infty\le\|u_i\|_2=\sqrt{3A_{ii}}<1-a,
\]
so the loadings are admissible.
\end{proof}

Therefore $H_a[A]:=[H_a(A_{ij})]_{i,j}\succeq0$ for every rank-\(\le3\) Hankel $A\in P_n((-\rho_a,\rho_a))$. The BGKP theorem (Theorem~\ref{thm:bgkp}) gives
\[
H_a(z)=\sum_{m\ge0}d_m(a)z^m,
\qquad d_m(a)\ge0,
\qquad |z|<\rho_a.
\]
Applying this for $a=0$, $a=1/3$, and $a=1/2$, and using
\[
q_1=\frac{1-1/3}{1+1/3}=\frac12,
\qquad
q_2=\frac{1-1/2}{1+1/2}=\frac13,
\qquad
\frac{\log q_1}{\log q_2}\notin\mathbb Q,
\]
the three-slice bootstrap (Lemma~\ref{lem:analyticity-bootstrap}) implies that $f$ is analytic at $t=1$.

Now Lemma~\ref{lem:coeff-id} identifies the coefficients:
\[
d_m(a) = \frac{T_m(a)}{m!} f^{(m)}(1), \qquad m \geq 0,
\]
where
\[
T_m(a) = \tfrac{1}{4}\!\left[(1+a)^{2-2m} + (1-a)^{2-2m}\right] - \tfrac{1}{2} (a^2-1)^{1-m}.
\]
A direct parity argument shows $T_1(a)=0$ and $T_m(a)>0$ for all $m\ge2$ and all $a\in(0,1)$ (the sum of reciprocal powers exceeds $2$ when the base ratio exceeds $1$). Hence $d_m(a)\ge0$ and $T_m(a)>0$ imply $f^{(m)}(1)\ge0$ for all $m\ge2$.

\textbf{Step 2: Bernoulli slices are analytic away from the boundary.}
Fix $p\in(0,1)$ and define, for
\[
\rho\in\bigl(-\min\{p^2,(1-p)^2\},\,p(1-p)\bigr),
\]
\begin{align*}
H_p(\rho)
&:=p^2 f\!\left(1+\frac{\rho}{p^2}\right)
+2p(1-p)f\!\left(1-\frac{\rho}{p(1-p)}\right)\\
&\quad +(1-p)^2f\!\left(1+\frac{\rho}{(1-p)^2}\right).
\end{align*}
If $Y,Y'$ are Bernoulli-$p$ variables with covariance $\rho$, then $I_f(Y;Y')=H_p(\rho)$.

\begin{lemma}[Bernoulli-slice realization]\label{lem:bernoulli-slice}
For every $p\in(0,1)$ and every $\rho_0\in(0,p(1-p))$, the function $H_p$ is real analytic in a neighborhood of $\rho_0$.
\end{lemma}

\begin{proof}
Choose $c\in(0,1)$ with $\rho_0=c^2p(1-p)$, and let
\[
B=\begin{cases}
c(1-p),&\text{with probability }p,\\
-cp,&\text{with probability }1-p.
\end{cases}
\]
Then $\mathbb EB=0$ and $\mathbb EB^2=\rho_0$. Moreover $p+B$ stays in $[\gamma,1-\gamma]$ for $\gamma=(1-c)\min\{p,1-p\}>0$.

Let $A\in P_n((-\eta,\eta))$ be Hankel of rank at most $3$, with $\eta>0$ small. Factor $A_{ij}=\langle v_i,v_j\rangle$ in $\mathbb R^r$, $r\le3$, and let $W$ be uniform on $\{\pm\sqrt r e_1,\dots,\pm\sqrt r e_r\}$, independent of $B$. Set $Z_i=\langle v_i,W\rangle$. Then $\mathbb EZ_i=0$ and $\mathbb E[Z_iZ_j]=A_{ij}$; for $\eta$ small, $|Z_i|<\gamma/2$.

Define Bernoulli variables conditionally on $(B,W)$ by
\[
\Pr(Y_i=1\mid B,W)=p+B+Z_i.
\]
Then each $Y_i$ has marginal Bernoulli-$p$, and conditional independence gives
\[
\operatorname{Cov}(Y_i,Y_j)=\mathbb E[(B+Z_i)(B+Z_j)]=\rho_0+A_{ij}.
\]
Thus the off-diagonal kernel is $K=[H_p(\rho_0+A_{ij})]_{i,j}$. Replicating conditionally on $(B,W)$ and applying Lemma~\ref{lem:replica-forcing} gives $K\succeq0$ for every such low-rank Hankel $A$. The shifted BGKP corollary (Lemma~\ref{cor:shifted-bgkp}) applied to $F(\rho)=H_p(\rho)$ implies analyticity of $H_p$ near $\rho_0$.
\end{proof}

\textbf{Step 3: algebraic isolation globalizes analyticity of $f$.}
We first prove analyticity on $(1,\infty)$. Fix $t_0>1$. Choose $p>0$ so small that $\rho_0:=p^2(t_0-1)$ lies in $(0,p(1-p))$ and the two auxiliary arguments
\[
1-\frac{p(t_0-1)}{1-p},
\qquad
1+\frac{p^2(t_0-1)}{(1-p)^2}
\]
lie in the local analytic neighborhood of $1$. For $t$ near $t_0$,
\begin{align*}
H_p(p^2(t-1))
&=p^2 f(t)
+2p(1-p)f\!\left(1-\frac{p(t-1)}{1-p}\right)\\
&\quad +(1-p)^2f\!\left(1+\frac{p^2(t-1)}{(1-p)^2}\right).
\end{align*}
The left side is analytic by Lemma~\ref{lem:bernoulli-slice}, and the two other $f$-terms are analytic by the local seed. Solving for $f(t)$ proves analyticity near $t_0$. Since $t_0>1$ was arbitrary, $f$ is analytic on $(1,\infty)$.

Now fix $t_0\in(0,1)$. Taking $p=1/2$ gives
\[
H_{1/2}\!\left(\frac{1-t}{4}\right)=\frac12 f(2-t)+\frac12 f(t).
\]
The argument $(1-t_0)/4$ lies in $(0,1/4)$, so $H_{1/2}$ is analytic there, and $2-t>1$ is already in the analytic region. Hence
\[
f(t)=2H_{1/2}\!\left(\frac{1-t}{4}\right)-f(2-t)
\]
is analytic near $t_0$. Thus $f$ is analytic on all of $(0,\infty)$.

\textbf{Step 4: Vivanti--Pringsheim globalizes the nonnegative series.}
Let $F(z)=\sum_{m\ge2}a_m z^m$ be the local Taylor series of the normalized representative $\bar f(1+z)$. Its coefficients are nonnegative. If its radius of convergence were finite, say $R<\infty$, then the Vivanti--Pringsheim theorem would force a singularity at the positive boundary point $z=R$ \cite[p.~235]{remmert1991complex}. But $\bar f(1+z)$ is real analytic at $z=R$, since $1+R>0$ and we just proved analyticity on $(0,\infty)$; hence it has a local holomorphic extension there, contradicting that singularity. Therefore $R=\infty$. The series agrees with $\bar f(1+z)$ near $z=0$, and hence by analytic continuation along the real interval it agrees for every $z>-1$, i.e.
\[
\bar f(t)=\sum_{m\ge2}a_m(t-1)^m,
\qquad t\in(0,\infty).
\]
This proves the necessity direction and completes the proof of Theorem~\ref{thm:main}.

\section{Practical Implications}

Theorem~\ref{thm:main} turns the search for PSD-generating divergences into an algebraic test on the normalized Taylor expansion, together with the global analyticity forced by PSD.

\subsection{Examples for Common $f$-divergences}

We retain the direct replica counterexample for non-analytic generators, because it is useful diagnostically. Throughout the calculation we fix $a=\tfrac13$, so the joint-to-product ratio for variables $Y_i,Y_j$ admits the three-point decomposition described above.

\textbf{Total variation / ReLU counterexample.}
Consider four coordinates with
\begin{align*}
u_1&=\tfrac{2}{3}(1,0), \quad
u_2=\tfrac{2}{3}\!\left(\tfrac{1}{\sqrt{2}},\,\tfrac{1}{\sqrt{2}}\right), \\
u_3&=\tfrac{2}{3}(0,1), \quad
u_4=\tfrac{2}{3}\!\left(-\tfrac{1}{\sqrt{2}},\,\tfrac{1}{\sqrt{2}}\right).
\end{align*}
For $f(t)=\tfrac12|t-1|$ and $f(t)=(t-1)_+$, the resulting $4\times4$ kernel can be calculated as follows:
\begin{align*}
K_{1/3}(i,j)&=H_{1/3}(\rho_{ij})\\
&=\tfrac49 f\!\left(1+\tfrac{9}{16}\rho_{ij}\right)
+\tfrac19 f\!\left(1+\tfrac94\rho_{ij}\right)
+\tfrac49 f\!\left(1-\tfrac98\rho_{ij}\right).
\end{align*}
For TVD/ReLU, the weighted sum collapses to $H_{1/3}(z)=\tfrac12|z|$. Here the one-hot latent dimension is $k=2$, so
$\rho_{ij}=\tfrac12\langle u_i,u_j\rangle$. The loadings satisfy $\|u_i\|_\infty\le\tfrac23=1-a$, so the family is admissible. Thus
$$
K_{1/3}=\tfrac12|\rho_{ij}|=\tfrac14|\langle u_i,u_j\rangle|=\frac19
\begin{bmatrix}
1 & \tfrac{\sqrt2}{2} & 0 & \tfrac{\sqrt2}{2}\\
\tfrac{\sqrt2}{2} & 1 & \tfrac{\sqrt2}{2} & 0\\
0 & \tfrac{\sqrt2}{2} & 1 & \tfrac{\sqrt2}{2}\\
\tfrac{\sqrt2}{2} & 0 & \tfrac{\sqrt2}{2} & 1
\end{bmatrix}.
$$
Moreover,
$$
\Delta_{ii}=d_{1/3}-H_{1/3}(\rho_{ii})=4/9-1/9=1/3,
\qquad \Delta_{ij}=0\ (i\ne j).
$$
The eigenvalues are
\begin{align*}
\lambda(K_{1/3})&=\{-0.046,\;0.111,\;0.111,\;0.268\},\\
\lambda(\Delta)&=\{1/3,1/3,1/3,1/3\}.
\end{align*}
Replica amplification beyond $R_{\min}=8$ forces an indefinite $f$-MI matrix. This illustrates the non-analytic obstruction in the theorem.

\textbf{Analytic examples.}
For the Kullback--Leibler generator,
$$
f(t)=t\log t=(t-1)+\tfrac12(t-1)^2-\tfrac16(t-1)^3+\tfrac1{12}(t-1)^4-\cdots.
$$
The linear transformation removes the linear term but not the negative cubic coefficient, so KL/Shannon MI is not PSD-generating. Jensen--Shannon similarly has
\begin{align*}
f(t)&=\tfrac12\!\left(t\log t-(t+1)\log\!\left(\tfrac{t+1}{2}\right)\right)\\
&=\tfrac18(t-1)^2-\tfrac1{16}(t-1)^3+\tfrac7{192}(t-1)^4-\cdots,
\end{align*}
and the negative cubic coefficient rules it out. By contrast, $\chi^2$ has the single monomial generator $f(t)=(t-1)^2$ and is PSD-generating. Non-polynomial examples also exist, e.g.
$$
f(t)=\cosh(t-1)-1=\sum_{m\ge1}\frac{(t-1)^{2m}}{(2m)!},
$$
which has a global nonnegative expansion and hence belongs to the PSD-generating cone.

\textbf{Outlook and scope.}
Theorem~\ref{thm:main} identifies the finite-alphabet PSD-generating cone globally, modulo the linear transformation that leaves every $f$-divergence unchanged. Extensions to continuous models will require additional analytic and measure-theoretic hypotheses (e.g.\ bounded likelihood ratios and integrability), but the finite-alphabet obstruction already explains why PSD is exceptionally brittle for information-theoretic dependence measures.

\section{Conclusion}
We gave a global characterization of PSD-generating $f$ for variable-indexed $f$-mutual-information matrices: PSD for every finite-alphabet family holds iff, after linear normalization, $f$ has a globally convergent expansion on $(0,\infty)$ with nonnegative coefficients from order $2$ onward. The proof combines a replica embedding for monomial generators, a replica-forcing reduction to low-rank Hankel positivity tests, the BGKP theorem, and Bernoulli-slice analytic continuation.

\section{Acknowledgement}

I thank Aishwarya Mandyam and Kirill Acharya for helpful feedback on an early version of this manuscript, and the reviewers for their careful engagement. Their comments prompted the construction of a counterexample showing that local absolute monotonicity alone is not a sufficient criterion for global PSD generation. AI was used during manuscript preparation to check algebra, identify possible gaps, and improve exposition. AI assistance also helped in developing and verifying a
counterexample for the failure of a local converse. All arguments, computations, citations, and final text were reviewed and revised by the author, who is responsible for any errors.

\bibliographystyle{IEEEtran}
\bibliography{example_ref}
\newpage

\appendix

\section{Calculations and Verifications for Proof of Theorem~\ref{thm:main}}

\subsection{Replica tensorization for monomial divergences (Proposition~\ref{prop:replica} details)}
\label{app:replicaTensor}

For $m\in\mathbb{N}$, set $f_m(t)=(t-1)^m$ and define the centered and scaled indicators:
$$
\phi_i^a(x):=\frac{\mathbf{1}\{x=a\}-p_i(a)}{\sqrt{p_i(a)}},\quad
g_i^{(m)}:=\sum_{a}\frac{\prod_{r=1}^{m}\phi_i^a(X_i^{(r)})}{p_i(a)^{\frac{m}{2}-1}}.
$$
Independence across the $m$ replica blocks gives
\begin{align*}
\langle g_i^{(m)},g_j^{(m)}\rangle
&=\sum_{a,b}\frac{\big(\mathbb{E}[\phi_i^a(X_i)\phi_j^b(X_j)]\big)^m}{p_i(a)^{\frac{m}{2}-1}p_j(b)^{\frac{m}{2}-1}} \\
&=\sum_{a,b}\frac{C_{ij}(a,b)^m}{[p_i(a)p_j(b)]^{\frac{m}{2}-1}}
=I_{f_m}(X_i;X_j).
\end{align*}

The second equality can be verified as follows:

\begin{align*}
C_{ij}(a,b) &:= \mathbb{E}\!\left[ \phi_i^a(X_i)\,\phi_j^b(X_j) \right] \\
&= \mathbb{E}\!\left[ \frac{(\delta_{X_i,a} - p_i(a))(\delta_{X_j,b} - p_j(b))}{\sqrt{p_i(a)p_j(b)}} \right] \\
&= \frac{p_{X_i X_j}(a,b) - p_i(a)p_j(b)}{\sqrt{p_i(a)p_j(b)}}.
\end{align*}

Using this result we can obtain the third equality:

$$
\sum_{a,b}\frac{C_{ij}(a,b)^m}{[p_i(a)p_j(b)]^{\frac{m}{2}-1}} = \sum_{a,b} \frac{(p_{i j}(a,b) - p_i(a)p_j(b))^m}{[p_i(a)p_j(b)]^{m-1}}$$

$$
= \sum_{a,b} p_i(a)p_j(b) \left(\frac{p_{i j}(a,b)}{p_i(a) p_j(b)} - 1 \right)^m = I_{f_m}(X_i; X_j).
$$

Thus $M^{(m)}=[I_{f_m}(X_i;X_j)]$ is a Gram matrix and hence PSD. Nonnegative mixtures preserve PSD, and the linear term vanishes since
\begin{align*}
&\sum_{a,b}p_i(a)p_j(b) \left(\frac{p_{ij}(a,b)}{p_i(a)p_j(b)}-1 \right) \\
&\quad =\sum_{a,b}\big(p_{ij}(a,b)-p_i(a)p_j(b)\big)=0.
\end{align*}
Extension to locally finite/infinite mixtures follows by truncation and dominated convergence.

\subsection{One-hot biased-coupling family and the three-point mixture}
\label{app:coupling}

Let $J\sim\mathrm{Unif}([k])$ and $S\sim\mathrm{Rademacher}(\pm1)$ be independent,
and set $U := S e_J\in\{\pm e_1,\dots,\pm e_k\}\subset\mathbb{R}^k$.
Fix a common bias $a\in(-1,1)$ and loading vectors $u_i\in\mathbb{R}^k$.
Define
\[
\Pr(Y_i=y \mid U)
= \tfrac12\big(1+y\,(a+\langle u_i,U\rangle)\big),
\qquad |a|+\|u_i\|_\infty\le 1.
\]

Averaging over $U$ gives $\Pr(Y_i=y)=\tfrac12(1+a\,y)$ since $\mathbb{E}[\langle u_i,U\rangle]=0$.
Using conditional independence given $U$ we obtain
\[
\Pr(Y_i=y_i,Y_j=y_j)
=\tfrac14\Big(1+a(y_i+y_j)+(a^2+\rho_{ij})\,y_i y_j\Big),
\]
where
\[
\rho_{ij} := \mathbb{E}[\langle u_i,U\rangle\langle u_j,U\rangle]
= \mathbb{E}[u_{i,J}u_{j,J}]
= \frac{1}{k}\langle u_i,u_j\rangle.
\]

The product of marginals is 
\[
\Pr(Y_i=y_i)\Pr(Y_j=y_j)
=\tfrac14\big(1+a(y_i+y_j)+a^2 y_i y_j\big).
\]

Hence the joint-to-product ratio is
\[
r_{ij}(y_i,y_j)=1+\frac{\rho_{ij}\,y_i y_j}{(1+a y_i)(1+a y_j)}.
\]

Grouping $(y_i,y_j)\in\{\pm1\}^2$ into three classes by the product $y_i y_j$ yields:
\begin{enumerate}
\item $(+1,+1)$ with weight $\tfrac{(1+a)^2}{4}$ and argument $1+\tfrac{\rho_{ij}}{(1+a)^2}$.
\item $(-1,-1)$ with weight $\tfrac{(1-a)^2}{4}$ and argument $1+\tfrac{\rho_{ij}}{(1-a)^2}$.
\item $y_i\ne y_j$ with total weight $\tfrac{1-a^2}{2}$ and argument $1-\tfrac{\rho_{ij}}{1-a^2}$.
\end{enumerate}
Therefore,

\begin{align*}
I_f(Y_i;Y_j) &=:H_a(\rho_{ij}) \\
&=\frac{(1+a)^2}{4}\,f\Big(1+\frac{\rho_{ij}}{(1+a)^2}\Big) \\
&\quad +\frac{(1-a)^2}{4}\,f\Big(1+\frac{\rho_{ij}}{(1-a)^2}\Big) \\
&\quad +\frac{1-a^2}{2}\,f\Big(1-\frac{\rho_{ij}}{1-a^2}\Big).
\end{align*}

When $a=0$, this reduces to $H_0(z)=\tfrac12(f(1+z)+f(1-z))$.

\subsection{Diagonal entry}
\label{app:diag}

For $i=j$, the ratio is supported only on the diagonal events. Specifically,
\[
r_{ii}(y,y) = \frac{1}{\Pr(Y_i=y)}, \qquad r_{ii}(y,\bar y) = 0.
\]
Thus,
\begin{align*}
I_f(Y_i;Y_i)
&= \frac{(1+a)^2}{4}\,f\!\Big(\tfrac{2}{1+a}\Big)
+ \frac{(1-a)^2}{4}\,f\!\Big(\tfrac{2}{1-a}\Big) \\
&\quad + \frac{1-a^2}{2}\,f(0)
=: d_a,
\end{align*}

which depends only on the common bias $a$ and not on the loading vector $u_i\in\mathbb{R}^k$. In particular, for $a=0$ we obtain
\[
d_0 = \tfrac12\big(f(2) + f(0)\big).
\]

\subsection{Replica block form and the PSD forcing step}
\label{app:replicaBlock}

Let
\[
\begin{gathered}
K_a=\bigl[H_a(\rho_{ij})\bigr]_{i,j},\\
\Delta_a=\operatorname{diag}\!\bigl(d_a-H_a(\rho_{11}),\ldots,d_a-H_a(\rho_{nn})\bigr),\\
\rho_{ii}=\mathbb{E}[\eta_i^2].
\end{gathered}
\]

For \(R\) conditionally independent replicas \(\{Y_i^{(r)}\}_{r=1}^R\) (independent draws given the shared latents \((U_1,\ldots,U_k)\)),
\[
B_R=J_R\otimes K_a + I_R\otimes \Delta_a.
\]

Diagonalizing $J_R$ yields:
$$
(P\otimes I_n)^\top B_R (P\otimes I_n)=\mathrm{diag}(R K_a+\Delta_a,\ \Delta_a,\ldots,\ \Delta_a).
$$

Notice $J_R$, the matrix of all ones, represents scalar multiplication which is a rank-one operation. So we can diagonalize $B_R$ so that $P^T J_R P = \mathrm{diag}(R, \ldots, 0)$. The contribution from $K_a$ or the shared component is:

\begin{align*}
&(P\otimes I_n)^\top (J_R\otimes K_a) (P\otimes I_n) \\
&\quad = (P^T J_R P) (I_n K_a I_n) = \mathrm{diag}( R K_a, \ldots, 0).
\end{align*}

The independent component $I_R \otimes \Delta_a$ then contributes $\mathrm{diag}(\Delta_a, \ldots, \Delta_a)$. 

If the $f$-MI matrix is PSD for all replicated families, then $R K_a+\Delta_a\succeq0$ for all $R$. If some $v$ had $v^\top K_a v<0$, then $v^\top(RK_a+\Delta_a)v<0$ for all sufficiently large $R$, a contradiction. Thus \(K_a\succeq0\).

\subsection{Low-rank Hankel positivity-preserver step}
\label{app:SBCR}

The necessity proof uses a low-rank Hankel positivity-preserver theorem rather than a Schoenberg rescaling argument.

\begin{theorem}[Low-rank Hankel positivity preservers]\label{thm:bgkp}
Let \(I=(-\rho,\rho)\). For a function \(F:I\to\mathbb R\), the following are equivalent:
\begin{enumerate}
\item \(F[-]\) preserves positive semidefiniteness entrywise on all matrices in \(P_n(I)\), for all \(n\).
\item \(F[-]\) preserves positive semidefiniteness entrywise on all Hankel matrices in \(P_n(I)\) of rank at most \(3\), for all \(n\).
\item \(F(x)=\sum_{m\ge0}c_m x^m\) on \(I\), with \(c_m\ge0\).
\end{enumerate}
This is a standard consequence of the Belton--Guillot--Khare--Putinar moment-sequence transform theorem; see \cite[Corollary~6.2]{belton2021moment} and \cite[Theorem~2.13]{khare2025entrywise}.
\end{theorem}

\begin{lemma}[Shifted BGKP corollary]\label{cor:shifted-bgkp}
Let \(F\) be defined on an interval containing \((\rho_0-\eta,\rho_0+\eta)\). If \(A\mapsto [F(\rho_0+A_{ij})]_{i,j}\) preserves PSD on all sufficiently small rank-\(\le3\) Hankel matrices \(A\in P_n((-\eta,\eta))\), for all \(n\), then \(F\) is real analytic in a neighborhood of \(\rho_0\).
\end{lemma}

\begin{proof}
Apply Theorem~\ref{thm:bgkp} to \(G(x):=F(\rho_0+x)\) on a sufficiently small symmetric interval around \(0\).
\end{proof}

\subsection{Analyticity bootstrap from three kernel slices}
\label{app:analyticity-bootstrap}

\begin{lemma}[Analyticity bootstrap from three kernel slices]\label{lem:analyticity-bootstrap}
Let \(g:(-\varepsilon,\varepsilon)\to\mathbb R\) be locally Lipschitz with \(g(0)=0\). For \(a\in(-1,1)\), define
\[
\mathcal H_a(z)
:=
\frac{(1+a)^2}{4}\,g\!\left(\frac{z}{(1+a)^2}\right)
+\frac{(1-a)^2}{4}\,g\!\left(\frac{z}{(1-a)^2}\right)
+\frac{1-a^2}{2}\,g\!\left(-\frac{z}{1-a^2}\right).
\]
Assume:
\begin{enumerate}
\item \(\mathcal H_0\) is real analytic on a neighborhood of \(0\).
\item There exist \(a_1,a_2\in(0,1)\) such that \(\mathcal H_{a_1}\) and \(\mathcal H_{a_2}\) are real analytic on a neighborhood of \(0\).
\item If \(q_i=(1-a_i)/(1+a_i)\), then \(\log q_1/\log q_2\notin\mathbb Q\).
\end{enumerate}
Then \(g\) is real analytic on a neighborhood of \(0\). Consequently, if \(g(u)=f(1+u)-f(1)\), then \(f\) is analytic at \(t=1\).
\end{lemma}

\begin{proof}
Write
\[
e(x)=\frac{g(x)+g(-x)}2,\qquad h(x)=\frac{g(x)-g(-x)}2.
\]
Since \(\mathcal H_0(z)=e(z)\), the even part \(e\) is analytic near \(0\). It remains to prove the odd part \(h\) is analytic. Because \(g\) is locally Lipschitz and \(g(0)=0\), \(h(x)=x\,m(x)\) with \(m\) bounded and even near \(0\).

Fix \(a\in(0,1)\) and set \(q=(1-a)/(1+a)\). A direct computation gives
\[
(1+q)^2\mathcal H_a((1-a)^2x)=g(q^2x)+q^2g(x)+2qg(-qx).
\]
Taking odd parts yields
\[
q^2x\bigl(m(x)-2m(qx)+m(q^2x)\bigr)=q^2x\,D_q^2m(x),
\]
where \(D_qm(x):=m(x)-m(qx)\). Hence \(D_q^2m=S_q\) for an analytic \(S_q\) near \(0\). Boundedness of \(m\) implies \(S_q(0)=0\): otherwise the telescoping differences \(d_n=(D_qm)(q^nx)\) would satisfy \(d_n-d_{n+1}=S_q(q^nx)\) and grow linearly after summation.

Write \(S_q(x)=\sum_{k\ge1}c_kx^k\) and define
\[
M_q^{\rm an}(x)=\sum_{k\ge1}\frac{c_k}{(1-q^k)^2}x^k.
\]
Since \(m\) is even, \(S_q\) and \(M_q^{\rm an}\) are even. Then \(D_q^2M_q^{\rm an}=S_q\). Hence \(N_q:=m-M_q^{\rm an}\) is bounded, even, and satisfies \(D_q^2N_q=0\). With \(\Delta_q=D_qN_q\), the identity \(\Delta_q(qx)=\Delta_q(x)\) and bounded telescoping give \(\Delta_q\equiv0\), so \(N_q(qx)=N_q(x)\).

Apply this first to \(q_1\): \(m=M_{q_1}^{\rm an}+N\) with \(N(q_1x)=N(x)\). Using the \(q_2\)-equation gives \(D_{q_2}^2N=A\), where \(A\) is analytic and \(A(0)=0\). Since \(N(q_1x)=N(x)\), also \(A(q_1x)=A(x)\). Continuity and \(q_1^nx\to0\) imply \(A\equiv0\), hence \(D_{q_2}^2N=0\). The same bounded telescoping argument gives \(N(q_2x)=N(x)\).

Because \(\log q_1/\log q_2\notin\mathbb Q\), the subgroup \(\{q_1^mq_2^n:m,n\in\mathbb Z\}\) is dense in \(\mathbb R_{>0}\). Continuity implies \(N\) is constant on a punctured neighborhood; evenness gives the same constant on both sides of \(0\). Thus \(m\) is analytic, so \(h=xm\) is analytic. Since \(e\) is analytic, \(g=e+h\) is analytic.
\end{proof}

\begin{lemma}[Coefficient identification]\label{lem:coeff-id}
Let $H_a$ be the three-point combination
\begin{align*}
H_a(z)&=\frac{(1+a)^2}{4}\,f\Big(1+\frac{z}{(1+a)^2}\Big) \\
&\quad +\frac{(1-a)^2}{4}\,f\Big(1+\frac{z}{(1-a)^2}\Big) \\
&\quad +\frac{1-a^2}{2}\,f\Big(1-\frac{z}{1-a^2}\Big).
\end{align*}
For $|z|$ small enough,
\[
H_a(z)=\sum_{m\ge0} d_m(a)\,z^m,
\]
with $d_0(a)=0$, $d_1(a)=0$, and $d_m(a)=\frac{T_m(a)}{m!}\,f^{(m)}(1)$ for $m\ge2$, where
\[
T_m(a)=\tfrac14\!\big[(1+a)^{2-2m}+(1-a)^{2-2m}\big]\;-\;\tfrac12(a^2-1)^{1-m}.
\]
\end{lemma}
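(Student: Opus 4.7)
My plan is a direct Taylor substitution followed by absolute-convergence rearrangement. Set $c_m := f^{(m)}(1)/m!$. In the context where this lemma is invoked, the preceding SBCR step (Proposition~\ref{prop:SBCR}) has already shown that $H_a$ is real-analytic near $0$; combined with the three-point identity defining $H_a$ as a finite dilation-combination of $u\mapsto f(1+u)$, this forces $f(1+u)=\sum_{m\ge 0}c_m u^m$ to converge on some interval $|u|<\eta$. I take this convergent Taylor expansion of $f$ as the input to the calculation.

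Next I would identify the admissible $z$-range. The three denominators $(1+a)^2$, $(1-a)^2$, $1-a^2$ appearing in $H_a$ satisfy $(1-|a|)^2\le 1-a^2\le (1+a)^2$, so for $|z|<\eta(1-|a|)^2$ each of the three arguments $z/(1+a)^2$, $z/(1-a)^2$, $-z/(1-a^2)$ lies strictly inside the convergence disk of $u\mapsto\sum c_m u^m$. Substituting and exchanging summation — justified by absolute convergence of each of the three series independently — yields
\[
H_a(z)=\sum_{m\ge 0} c_m\,z^m\left[\frac{(1+a)^{2-2m}+(1-a)^{2-2m}}{4}+\frac{(-1)^m(1-a^2)^{1-m}}{2}\right].
\]
Applying the identity $(a^2-1)^{1-m}=(-1)^{1-m}(1-a^2)^{1-m}=-(-1)^m(1-a^2)^{1-m}$, the third term rewrites as $-(a^2-1)^{1-m}/2$, and the bracketed expression becomes exactly the stated $T_m(a)$.

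Finally, I would verify the boundary cases $m\in\{0,1\}$ by direct arithmetic. At $m=0$ the bracket evaluates to $\tfrac{1}{4}\cdot 2(1+a^2)+\tfrac{1}{2}(1-a^2)=1$, but $c_0=f(1)=0$, so $d_0(a)=0$. At $m=1$ the bracket reduces to $\tfrac{1}{4}(1+1)-\tfrac{1}{2}=0$, so $d_1(a)=0$ independently of $f'(1)$. For $m\ge 2$ this yields $d_m(a)=T_m(a)\,c_m=T_m(a)\,f^{(m)}(1)/m!$ as claimed. The only real difficulty is bookkeeping of signs and exponents — particularly the $(-1)^m$ in the third term — and there are no deeper analytic subtleties once the convergence radius $\eta(1-|a|)^2$ has been identified.
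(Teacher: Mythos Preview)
Your proof is correct and follows the same route as the paper's own argument: Taylor-expand $f(1+u)$, substitute the three dilations, and read off the $z^m$-coefficient as $T_m(a)\,c_m$. Your treatment of the $m=0$ case is in fact slightly sharper than the paper's, since $T_0(a)=1$ (not $0$) and $d_0(a)=0$ holds only because $c_0=f(1)=0$.
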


\begin{proof}
Expand $f(1+u)=\sum_{m\ge0}\frac{f^{(m)}(1)}{m!}u^m$ and substitute \(u=z/(1\pm a)^2\) and \(u=-z/(1-a^2)\) into the three-point formula. The coefficient multiplying \(f^{(m)}(1)z^m/m!\) is exactly \(T_m(a)\).
\end{proof}

Combining the BGKP step with Lemma~\ref{lem:coeff-id} yields $d_m(a) = \frac{T_m(a)}{m!} f^{(m)}(1) \ge 0$ for $m \ge 2$.
Since $T_1(a)=0$ and $T_m(a)>0$ for all $m\ge2$ and $a\in(0,1)$ (Appendix~\ref{app:positivity}), we conclude $f^{(m)}(1)\ge0$ for all $m\ge2$, which is the desired necessity condition.

\subsection{Positivity of $T_m(a)$ for $m\ge2$ and the necessity inequalities}
\label{app:positivity}
Let $u=1+a$, $v=1-a$ (so $u>v>0$ and $uv=1-a^2$). For $m\ge2$, write $k=m-1\ge1$:

\begin{align*}
(uv)^{m-1}T_m(a)
&=\frac14\big(u^{1-m}v^{m-1}+v^{1-m}u^{m-1}\big)+\frac12(-1)^m \\
&=\frac14\big(r^k+r^{-k}\big)-\frac12(-1)^k
\end{align*}
where we define $r:=u/v>1.$ If $k$ is odd, RHS $=\tfrac14(r^k+r^{-k})+\tfrac12>0$. If $k$ is even, since $r>1$ and $k\ge2$, $r^k+r^{-k}>2$, hence RHS $>\tfrac12-\tfrac12=0$. Therefore, for all $a\in(0,1)$,
$$
T_1(a)=0,\qquad T_m(a)>0\ \ \forall\, m\ge2.
$$
Because $d_m(a)\ge0$ and $T_m(a)>0$, we obtain
$$
f^{(m)}(1)\ge0\qquad\forall\, m\ge2.
$$

\subsection{TVD/ReLU example calculations}
\label{app:exampleCalc}

Fix \(a=\tfrac13\). The kernel map is
\[
\begin{aligned}
H_{1/3}(z)
&= \frac{4}{9}\,f\!\left(1+\frac{9}{16}z\right)
 + \frac{1}{9}\,f\!\left(1+\frac{9}{4}z\right)\\
&\quad + \frac{4}{9}\,f\!\left(1-\frac{9}{8}z\right),
\qquad z\in\mathbb{R}.
\end{aligned}
\]
We verify that \(H_{1/3}(z)=\tfrac12|z|\) for both
\(f_{\mathrm{TVD}}(t)=\tfrac12|t-1|\) (total variation) and
\(f_{\mathrm{ReLU}}(t)=(t-1)_+\) (ReLU).

\medskip
\noindent\textbf{TVD.}
Since \(f_{\mathrm{TVD}}(1+\beta z)=\tfrac12|\beta z|\),
\[
\begin{aligned}
H_{1/3}(z)
&=\frac12\!\left(\frac{4}{9}\cdot\frac{9}{16}
               +\frac{1}{9}\cdot\frac{9}{4}
               +\frac{4}{9}\cdot\frac{9}{8}\right)|z| \\
&=\frac12\left(\frac14+\frac14+\frac12\right)|z|
=\frac12|z|.
\end{aligned}
\]

\medskip
\noindent\textbf{ReLU.}
Since \(f_{\mathrm{ReLU}}(1+\beta z)=(\beta z)_+\),
\[
\begin{aligned}
H_{1/3}(z)
&=\left(\frac{4}{9}\cdot\frac{9}{16}
       +\frac{1}{9}\cdot\frac{9}{4}\right)(z)_+
 +\frac{4}{9}\left(-\frac{9}{8}z\right)_+ \\
&=\frac12(z)_+ + \frac12(-z)_+
=\frac12|z|.
\end{aligned}
\]

\medskip
\noindent\textbf{Diagonal correction.}
Recall
\[
d_a=\frac{4}{9}f\!\left(\frac{2}{1+a}\right)
   +\frac{1}{9}f\!\left(\frac{2}{1-a}\right)
   +\frac{4}{9}f(0).
\]
For \(a=\tfrac13\), we have \(\frac{2}{1+a}=\tfrac32\) and \(\frac{2}{1-a}=3\).

For TVD, \(f(\tfrac32)=\tfrac14\), \(f(3)=1\), \(f(0)=\tfrac12\), hence
\[
d_{1/3}=\frac{4}{9}\cdot\frac14+\frac{1}{9}\cdot1+\frac{4}{9}\cdot\frac12
=\frac{4}{9}.
\]
For ReLU, \(f(\tfrac32)=\tfrac12\), \(f(3)=2\), \(f(0)=0\), hence
\[
d_{1/3}=\frac{4}{9}\cdot\frac12+\frac{1}{9}\cdot2+\frac{4}{9}\cdot0
=\frac{4}{9}.
\]

Thus \(d_{1/3}=\tfrac49\) in both cases. Using \(H_{1/3}(z)=\tfrac12|z|\), if
\(\rho_{ii}=\tfrac{2}{9}\) then \(H_{1/3}(\rho_{ii})=\tfrac12|\rho_{ii}|=\tfrac19\),
and therefore
\[
\Delta_{ii}=d_{1/3}-H_{1/3}(\rho_{ii})
=\frac49-\frac19=\frac13.
\]

\end{document}